\documentclass[pdflatex,sn-mathphys-num]{sn-jnl}
\usepackage{graphicx}
\usepackage{placeins}

\usepackage{amsmath,amssymb,amsfonts,amsthm}
\usepackage{mathtools,braket,bm,booktabs}
\usepackage[title]{appendix}
\theoremstyle{thmstyleone}
\newtheorem{theorem}{Theorem}
\newtheorem{corollary}[theorem]{Corollary}
\newtheorem{proposition}[theorem]{Proposition}
\DeclareMathOperator{\tr}{tr}
\DeclareMathOperator{\supp}{supp}
\DeclareMathOperator{\Ad}{Ad}
\newcommand{\Id}{\mathrm{id}}
\newcommand{\FO}{\mathrm{FO}}
\newcommand{\SW}{\mathrm{SW}}
\newcommand{\HS}{\mathrm{HS}}
\newcommand{\cE}{\mathcal E}
\newcommand{\cS}{\mathcal S}
\newcommand{\cH}{\mathcal H}
\begin{document}

\title[Commutator geometry and information preservation]{Commutator Geometry and Information Preservation in the Quantum Switch}
\author*[1]{\fnm{Xu} \sur{Chen}}\email{cnc@hebust.edu.cn}
\author[2]{\fnm{Xue} \sur{Ma}}
\affil*[1]{\orgdiv{School of Sciences}, \orgname{Hebei University of Science and Technology}, \orgaddress{\city{Shijiazhuang}, \postcode{050018}, \state{Hebei}, \country{People's Republic of China}}}
\affil[2]{\orgdiv{Network Management Center}, \orgname{China Mobile Communications Group Hebei Co., Ltd.}, \orgaddress{\city{Shijiazhuang}, \postcode{050000}, \state{Hebei}, \country{People's Republic of China}}}

\abstract{Two commuting channels yield the same composite channel in either fixed order, yet their quantum switch can alter information preservation. We study how this effect depends on the input state, retaining the joint output of the control and target. Exact Kraus commutator identities determine overlap and squared fidelity gaps. Fixed order preserves product inputs at least as well as the switch in fidelity. The reverse inequality holds for pure states of two qubits whose control Schmidt basis can be chosen to coincide with the order basis. A commutator matrix determines the target basis that maximizes the squared fidelity gap at fixed entanglement within this aligned family. For Pauli $X$ and $Y$ channels with equal error probabilities, the switch replaces a logical phase flip with an operator that stabilizes a joint code. At fixed noise, the switch preserves at least as much distinguishability and quantum Fisher information as fixed order for every state family in this code. For the pure encoded family studied here, exact formulas connect the phase information gain to the squared fidelity gap and quantify the reduction in information about the noise probability. Optimizing the probes yields equal maximal quantum Fisher information for noise estimation under both joint channels. These results connect commutator geometry and logical error structure to state preservation and the retention of encoded information.}

\keywords{quantum switch, quantum fidelity, Kraus commutators, quantum Fisher information}
\maketitle

\section{Introduction}\label{sec:introduction}

The quantum switch coherently controls the order of two operations,
connecting quantum causal structure with the processing of quantum
information~\cite{Hardy2007,Oreshkov2012,Brukner2014,Chiribella2013}.
Interference between orders enables advantages in channel
discrimination~\cite{Chiribella2012}, quantum
computation~\cite{Araujo2014}, and communication
complexity~\cite{Guerin2016}. Experiments have demonstrated superpositions
of gate orders~\cite{Procopio2015} and verified causal nonseparability
in the quantum switch~\cite{Rubino2017,Goswami2018}.

For noisy channels, coherent control of order can change the information
transmitted through a sequence of operations. Two identical completely
depolarizing channels can transmit classical information when combined
in a quantum switch~\cite{Ebler2018}. Quantum transmission through noisy
channels has also been studied theoretically~\cite{Salek2018,Caleffi2020}
and demonstrated experimentally~\cite{Guo2020}, while protocols involving
teleportation and Pauli channels show improvements in conditional
fidelity~\cite{Mukhopadhyay2020,Delgado2020}. Identical depolarizing
channels illustrate the distinction between channel commutativity and
Kraus commutativity: the channels commute under composition, although
their Kraus operators need not commute. The two fixed orders can therefore
yield the same composite channel while the quantum switch changes the
joint output.

These results raise the question of how interference between orders
affects the preservation of quantum information. We address this
question by comparing the quantum switch with fixed order for two
commuting channels. Both processes use each channel once and retain
the joint output of the control and target. We examine preservation
through fidelity with the input state and the distinguishability of
encoded state families.

These measures describe different aspects of information preservation.
Fidelity measures how well a given input state is preserved, whereas
distinguishability measures how well the output retains differences
between inputs. Quantum Fisher information (QFI) quantifies local
distinguishability within a parameterized state family and connects it
to parameter estimation~\cite{Braunstein1994}. Metrological applications
of indefinite causal order include displacement
estimation~\cite{Zhao2020}, estimation of noisy unitary
parameters~\cite{Chapeau2021,An2024}, and joint estimation of signal
and noise parameters~\cite{Goldberg2023}. General optimization
frameworks determine the attainable precision for different classes
of strategies~\cite{Liu2024}. Here, we determine how input geometry
controls the fidelity comparison and how the action of errors on a
code relates this comparison to the preservation of parameter information.

Measurements of the switched control can quantify Kraus
noncommutativity~\cite{Gao2023}. Here, exact Kraus commutator identities
isolate the contribution selected by the input and determine the
overlap and squared fidelity gaps between the two processes.
For every product input, the fidelity under fixed order is at least
as high as under the switch. The reverse inequality holds for pure
states of two qubits whose control Schmidt basis can be chosen to
coincide with the order basis. We call these states aligned inputs.

The same identities describe the input geometry. Within the aligned
family, a commutator matrix determines the target basis that maximizes
the squared fidelity gap at fixed entanglement. Control rotations show
that inputs with the same entanglement can have gaps of opposite sign,
while qutrit examples establish the role of target dimension. These
results identify input orientation and accessible dimension as essential
to the fidelity comparison, complementing studies of entanglement
generation~\cite{Aslanbas2026} and certification~\cite{Wang2026}
through coherent control of order.

For Pauli $X$ and $Y$ channels with equal error probabilities,
the switch replaces a logical phase flip with an operator that
stabilizes a joint code. An explicit channel maps the switched output
of every encoded state to its fixed order counterpart. At fixed noise,
this relation establishes that the switch preserves at least as much
distinguishability and QFI as fixed order for every state family in
the code. For the pure encoded family studied here, exact formulas connect the phase
QFI gain to the squared fidelity gap and quantify the reduction in
information about the noise probability for the same probe. Optimizing
the probes gives equal maximal noise QFI for both joint channels.
These results connect commutator geometry and logical error structure
to state preservation and the retention of encoded information.

Section~\ref{sec:model} defines the joint channels and comparison
measures. Sections~\ref{sec:gaps}--\ref{sec:dimension} establish the
commutator gaps and their dependence on input geometry and target
dimension. Section~\ref{sec:encoding} develops the Pauli code and its
information properties. Section~\ref{sec:conclusion} summarizes the
results and their implications. Additional derivations are given in
the appendices.

\section{Joint fidelity and interference between orders}\label{sec:model}

Let the control space be $\cH_C\cong\mathbb C^2$, with a fixed order basis $\{\ket0_C,\ket1_C\}$, and let $\cH_T\cong\mathbb C^d$. Using the operator sum description of quantum operations~\cite{Kraus1971}, equivalent to a dilation by an environment~\cite{Stinespring1955}, consider two completely positive trace preserving (CPTP) channels~\cite{NielsenChuang2010}
\begin{equation}
 \Phi(\rho)=\sum_i K_i\rho K_i^\dagger,\qquad
 \Psi(\rho)=\sum_j L_j\rho L_j^\dagger,
 \qquad \Phi\circ\Psi=\Psi\circ\Phi=\cE.
 \label{eq:channels}
\end{equation}
The quantum switch and the fixed order map are
\begin{align}
 \cS(\rho)&=\sum_{ij}W_{ij}\rho W_{ij}^\dagger,
 &W_{ij}&=\ket0\!\bra0_C\otimes K_iL_j+
 \ket1\!\bra1_C\otimes L_jK_i,\label{eq:switch}\\
 \mathcal F(\rho)&=(\Id_C\otimes\cE)(\rho).
 &&\label{eq:fixed}
\end{align}
Each protocol uses each channel once. We evaluate preservation under the joint maps in Eqs.~\eqref{eq:switch} and~\eqref{eq:fixed}, retaining the complete joint output.

For an input $\rho$, write $\rho_{\SW}=\cS(\rho)$ and $\rho_{\FO}=\mathcal F(\rho)$. We use the root fidelity~\cite{Uhlmann1976,Jozsa1994,Watrous2018}
\begin{equation}
 F(\rho,\sigma)=\tr\sqrt{\sqrt\rho\,\sigma\sqrt\rho},\qquad
 F_X=F(\rho,\rho_X),\qquad X\in\{\FO,\SW\}.
 \label{eq:fidelity}
\end{equation}
Fidelity with the input quantifies state preservation~\cite{Gilchrist2005}. We also write $P_X=\tr(\rho\rho_X)$ and $\Delta P=P_{\SW}-P_{\FO}$. For pure inputs, $P_X=F_X^2$, so $\Delta P$ and the fidelity gap have the same sign. For mixed inputs, $P_X$ denotes the Hilbert--Schmidt overlap and $F_X$ denotes root fidelity.

To see where the two outputs differ, expand the input as
$\rho=\sum_{a,b=0}^1\ket a\!\bra b_C\otimes\rho_{ab}$ and define
\begin{equation}
 \mathcal M(A)=\sum_{ij}K_iL_j A(L_jK_i)^\dagger.
 \label{eq:crossmap}
\end{equation}
The switch applies $\cE$ to both diagonal control blocks, but applies $\mathcal M$ to the $01$ block. The fixed order applies $\cE$ to every block. In particular,
\begin{equation}
 \tr_C\rho_{\SW}=\cE(\rho_{00}+\rho_{11})
 =\tr_C\rho_{\FO}.
 \label{eq:marginal}
\end{equation}
The fixed order preserves the reduced control state exactly. The switch encodes the order sensitive interference in the off-diagonal joint blocks through $\mathcal M$.

Define the symmetric and antisymmetric amplitudes
\begin{equation}
 S_{ij}=\frac{K_iL_j+L_jK_i}{2},\qquad
 H_{ij}=\frac{K_iL_j-L_jK_i}{2},\qquad
 Q_{ij}=[K_i,L_j]=2H_{ij}.
 \label{eq:symmetric}
\end{equation}
Then
\begin{equation}
 W_{ij}=I_C\otimes S_{ij}+Z_C\otimes H_{ij}.
 \label{eq:order-amplitude}
\end{equation}
Thus the amplitude that changes sign upon reversing the order is coupled to $Z_C$. Channel commutativity gives the operator identity
\begin{equation}
 \mathcal D(A):=\sum_{ij}Q_{ij}A Q_{ij}^\dagger
 =2\cE(A)-\mathcal M(A)-\mathcal M(A)^\dagger
 \quad (A=A^\dagger).
 \label{eq:defect}
\end{equation}
For $A\geq0$, $\mathcal D(A)\geq0$. This positivity determines the product bound, while the coupling to $Z_C$ determines how the input correlations change its sign.

\section{Exact commutator gaps}\label{sec:gaps}

\subsection{Product inputs}

For every product input $\rho=\tau_C\otimes\rho_T$, fidelity monotonicity under partial trace and Eq.~\eqref{eq:marginal} give
\begin{equation}
 F_{\SW}\leq F(\rho_T,\cE(\rho_T))=F_{\FO}.
 \label{eq:all-product-bound}
\end{equation}
A pure control resolves this data processing bound~\cite{Watrous2018} into an exact commutator sum.

\begin{theorem}[Product loss in arbitrary target dimension]\label{thm:product}
Let $\Phi$ and $\Psi$ be commuting CPTP maps on a finite-dimensional target, with joint maps defined by Eqs.~\eqref{eq:switch} and~\eqref{eq:fixed}. Let $\rho=\ket\phi\!\bra\phi_C\otimes\rho_T$, with
$\ket\phi=a\ket0+b\ket1$, $|a|^2+|b|^2=1$, and $q=|a|^2|b|^2$. Then
\begin{equation}
 P_{\FO}-P_{\SW}
 =q\sum_{ij}\left\|\sqrt{\rho_T}\,Q_{ij}\sqrt{\rho_T}\right\|_{\HS}^2.
 \label{eq:product-gap}
\end{equation}
Both $P_{\SW}\leq P_{\FO}$ and $F_{\SW}\leq F_{\FO}$ hold. For $q>0$, equality in either inequality is equivalent to
\begin{equation}
 \Pi_T Q_{ij}\Pi_T=0\quad\text{for every }i,j,
 \label{eq:product-equality}
\end{equation}
where $\Pi_T$ projects onto $\supp\rho_T$. If $q=0$, the two outputs coincide. For a pure target $\rho_T=\ket\psi\!\bra\psi$, Eq.~\eqref{eq:product-gap} becomes
\begin{equation}
 \ F_{\FO}^2-F_{\SW}^2
 =q\sum_{ij}|\bra\psi Q_{ij}\ket\psi|^2.\ 
 \label{eq:pure-product-gap}
\end{equation}
\end{theorem}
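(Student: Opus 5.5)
The plan is a direct block computation of both overlaps, followed by the defect identity Eq.~\eqref{eq:defect} to isolate the commutator sum. Write the input as $\rho=\sum_{a,b}\ket a\!\bra b_C\otimes c_{ab}\rho_T$ with $c_{00}=|a|^2$, $c_{11}=|b|^2$, $c_{01}=a\bar b$ and $c_{10}=\bar a b$. By the block description after Eq.~\eqref{eq:crossmap}, $\rho_{\FO}$ has blocks $c_{ab}\cE(\rho_T)$. The output $\rho_{\SW}$ has the same diagonal blocks, while its $01$ block is $c_{01}\mathcal M(\rho_T)$ and its $10$ block is $c_{10}\mathcal M(\rho_T)^\dagger$. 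Since $\tr(\rho\sigma)=\sum_{ab}\tr(\rho_{ab}\sigma_{ba})$, the diagonal contributions cancel in the difference, and
\begin{equation*}
 P_{\FO}-P_{\SW}=q\,\tr\!\big(\rho_T[2\cE(\rho_T)-\mathcal M(\rho_T)-\mathcal M(\rho_T)^\dagger]\big)=q\,\tr\big(\rho_T\,\mathcal D(\rho_T)\big),
\end{equation*}
using $|c_{01}|^2=q$ and Eq.~\eqref{eq:defect} with $A=\rho_T$. Expanding $\mathcal D$ gives $\tr(\rho_T Q_{ij}\rho_T Q_{ij}^\dagger)=\|\sqrt{\rho_T}Q_{ij}\sqrt{\rho_T}\|_{\HS}^2$, which is Eq.~\eqref{eq:product-gap}. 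For pure $\rho_T$ we have $P_X=F_X^2$, and the summand reduces to $|\bra\psi Q_{ij}\ket\psi|^2$, giving Eq.~\eqref{eq:pure-product-gap}.

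The inequality $P_{\SW}\le P_{\FO}$ is immediate from the nonnegativity of the sum. The inequality $F_{\SW}\le F_{\FO}$ is already Eq.~\eqref{eq:all-product-bound}. For $q=0$ the control is $\ket0$ or $\ket1$, so there is no off-diagonal block and the outputs coincide.

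For the $P$-equality with $q>0$: the sum vanishes iff $\sqrt{\rho_T}Q_{ij}\sqrt{\rho_T}=0$ for all $i,j$. Because $\sqrt{\rho_T}$ is invertible on its support, this is equivalent to $\Pi_TQ_{ij}\Pi_T=0$, i.e.\ Eq.~\eqref{eq:product-equality}.

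The main obstacle is the root-fidelity equality, since $F$ is not a simple trace. For the direction $F_{\SW}=F_{\FO}\Rightarrow$ Eq.~\eqref{eq:product-equality}, I would not use a fidelity-equality criterion. Instead I would show that Eq.~\eqref{eq:product-equality} is equivalent to $\rho_{\SW}$ and $\rho_{\FO}$ having identical compressions $(I_C\otimes\Pi_T)(\,\cdot\,)(I_C\otimes\Pi_T)$. Fidelity with $\rho$ depends only on the compression of the second argument to $\supp\rho=\ket\phi\!\bra\phi\otimes\Pi_T$.

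The compressions agree iff $\Pi_T\mathcal M(\rho_T)\Pi_T=\Pi_T\cE(\rho_T)\Pi_T$. Using $\mathcal M(A)=\cE(A)-\sum_{ij}K_iL_jA\,Q_{ij}^\dagger$, sufficiency follows from Eq.~\eqref{eq:product-equality}, because $\rho_T=\Pi_T\rho_T\Pi_T$ and $\Pi_TQ_{ij}^\dagger\Pi_T=0$. For necessity, I would compress further to the one-dimensional support $\ket\phi\!\bra\phi\otimes\Pi_T$. There $F(\rho,\sigma)=F\big(\rho_T,\ X_\sigma\big)$ with
\begin{equation*}
 X_\sigma=\Pi_T\big(\bra\phi\,\sigma\,\ket\phi\big)\Pi_T.
\end{equation*}
The two relevant operators are then $X_{\FO}=\Pi_T\cE(\rho_T)\Pi_T$ and $X_{\SW}=X_{\FO}-q\,\Pi_T\mathcal D(\rho_T)\Pi_T$.

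Since $\mathcal D(\rho_T)\ge0$, we have $X_{\SW}\le X_{\FO}$. The map $X\mapsto\tr\sqrt{\sqrt{\rho_T}X\sqrt{\rho_T}}$ is operator monotone and strictly increasing on the support of $\rho_T$. Hence equality forces $\Pi_T\mathcal D(\rho_T)\Pi_T=0$. Its trace against $\rho_T$ is the nonnegative sum above, so the sum vanishes and Eq.~\eqref{eq:product-equality} follows.

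The delicate point is this strictness. For $Y:=\sqrt{\rho_T}\,\mathcal D(\rho_T)\sqrt{\rho_T}\ge0$ with $Y\neq0$ on the support, I need $\tr\sqrt{B-qY}<\tr\sqrt B$, where $B=\sqrt{\rho_T}\,X_{\FO}\sqrt{\rho_T}$. I would obtain this from the strict monotonicity of the trace of the operator-monotone function $\sqrt{\cdot}$ on positive semidefinite operators.
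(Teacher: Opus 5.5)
Your proposal is correct and follows essentially the same route as the paper: the block computation with the defect identity~\eqref{eq:defect} gives the overlap gap, and the fidelity equality rests on the same strict monotonicity of $\tr\sqrt{\cdot}$ applied to $\sqrt{\rho_T}\,\cE(\rho_T)\sqrt{\rho_T}$ and to its reduction by $q\sqrt{\rho_T}\,\mathcal D(\rho_T)\sqrt{\rho_T}$. Your separate compression argument for sufficiency and your appeal to Eq.~\eqref{eq:all-product-bound} for $F_{\SW}\le F_{\FO}$ are valid but redundant: the paper obtains the inequality and both directions of the equality condition at once from operator monotonicity and the decomposition $q\sqrt{\rho_T}\,\mathcal D(\rho_T)\sqrt{\rho_T}=q\sum_{ij}T_{ij}T_{ij}^\dagger$ with $T_{ij}=\sqrt{\rho_T}Q_{ij}\sqrt{\rho_T}$.
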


\begin{proof}
Projecting onto the input control and using Eq.~\eqref{eq:defect} gives
\begin{equation}
 \bra\phi\rho_{\SW}\ket\phi_C
 =\cE(\rho_T)-q\mathcal D(\rho_T).
 \label{eq:projected-switch}
\end{equation}
Tracing against $\rho_T$ proves Eq.~\eqref{eq:product-gap}. Define
\begin{equation}
 A=\sqrt{\rho_T}\,\cE(\rho_T)\sqrt{\rho_T},\qquad
 B=q\sqrt{\rho_T}\,\mathcal D(\rho_T)\sqrt{\rho_T}.
\end{equation}
Then $F_{\FO}=\tr\sqrt A$ and $F_{\SW}=\tr\sqrt{A-B}$. Since $0\leq A-B\leq A$, operator monotonicity gives $F_{\SW}\leq F_{\FO}$. Equality holds exactly when the positive operator $\sqrt A-\sqrt{A-B}$ vanishes, equivalently $B=0$. For $q>0$, $B$ is a sum of positive operators $qT_{ij}T_{ij}^\dagger$, where $T_{ij}=\sqrt{\rho_T}Q_{ij}\sqrt{\rho_T}$. Thus equality is equivalent to $T_{ij}=0$ for every pair, or Eq.~\eqref{eq:product-equality}. Equation~\eqref{eq:product-gap} gives the same condition for overlap equality.
\end{proof}

The gap measures the commutator component within the initial target support. A component that carries the state into the orthogonal complement has zero compression and can change the joint output while saturating the fidelity bound.

The compression also has a direct measurement interpretation. Initialize the switch in $\ket+_C\ket m_T$. The antisymmetric control outcome has probability~\cite{Gao2023}
\begin{equation}
 p_- =\frac14\sum_{ij}\bra m Q_{ij}^\dagger Q_{ij}\ket m.
\end{equation}
Resolving the target output into $\ket m$ and its orthogonal complement gives
\begin{align}
 p_{-,m}&=\frac14\Gamma(m)=\frac14\sum_{ij}|\bra m Q_{ij}\ket m|^2,\nonumber\\
 p_- -p_{-,m}&=\frac14\sum_{ij}
 \left\|(I-\ket m\bra m)Q_{ij}\ket m\right\|^2.
 \label{eq:commutator-return}
\end{align}
For a balanced control and a pure target, the loss in squared fidelity therefore equals the joint probability of the antisymmetric outcome and return to the initial target state. This return component also gives the gain in Theorem~\ref{thm:qubit}.

These quantities are independent of the Kraus representation. Under isometric changes $K'_\alpha=\sum_i U_{\alpha i}K_i$ and $L'_\beta=\sum_j V_{\beta j}L_j$, the family $Q_{ij}$ transforms by the isometry $U\otimes V$. The squared matrix element sums and $\mathcal D$ are therefore invariant. Zero padding accommodates representations of different sizes. Process tomography of the channels~\cite{Chuang1997} determines these invariants; Eq.~\eqref{eq:commutator-return} measures their return component directly.

\subsection{Orthogonal order branches and the qubit sign reversal}

Consider next
\begin{equation}
 \ket\Psi=a\ket0_C\ket m_T+b\ket1_C\ket n_T,
 \qquad \braket{m|n}=0.
 \label{eq:aligned}
\end{equation}
For $ab\neq0$, Eq.~\eqref{eq:aligned} aligns the control Schmidt basis~\cite{Horodecki2009} with the order basis: the two branches carry orthogonal target states. Equivalently, the reduced control state is diagonal in the order basis. Every maximally entangled state with a qubit control has this form. A nonmaximally entangled state instead selects a preferred control Schmidt axis.

For any $d\geq2$, introduce
\begin{align}
 x_{ij}&=\bra m K_iL_j\ket m,&
 y_{ij}&=\bra n K_iL_j\ket n,\nonumber\\
 u_{ij}&=\bra m L_jK_i\ket m,&
 v_{ij}&=\bra n L_jK_i\ket n,
 \label{eq:elements}
\end{align}
and set
\begin{equation}
\begin{aligned}
 P_m&=\sum_{ij}|x_{ij}|^2=\sum_{ij}|u_{ij}|^2,
 &P_n&=\sum_{ij}|y_{ij}|^2=\sum_{ij}|v_{ij}|^2,\\
 C&=\sum_{ij}x_{ij}y_{ij}^*,
 &D&=\sum_{ij}x_{ij}v_{ij}^*.
\end{aligned}
 \label{eq:constants}
\end{equation}
Channel commutativity gives the norm equalities. For $q=|a|^2|b|^2$, the joint Kraus expectations yield
\begin{align}
 F_{\FO}^2&=|a|^4P_m+|b|^4P_n+2q\operatorname{Re}C,\nonumber\\
 F_{\SW}^2&=|a|^4P_m+|b|^4P_n+2q\operatorname{Re}D.
 \label{eq:aligned-fidelities}
\end{align}
The relative phase of $a$ and $b$ cancels. The sign of the gap is determined by $\operatorname{Re}(D-C)$.

\begin{theorem}[Fidelity gap for aligned pure states of two qubits]\label{thm:qubit}
Let $\Phi$ and $\Psi$ be commuting CPTP maps on a qubit target. For the joint maps~\eqref{eq:switch}--\eqref{eq:fixed} and a normalized pure input~\eqref{eq:aligned} with $m,n$ an orthonormal target basis and $q=|a|^2|b|^2$,
\begin{equation}
 \ F_{\SW}^2-F_{\FO}^2
 =q\,\Gamma(m),\qquad
 \Gamma(m)=\sum_{ij}|\bra m Q_{ij}\ket m|^2.\ 
 \label{eq:qubit-gap}
\end{equation}
Consequently $F_{\SW}\geq F_{\FO}$. If $q>0$, equality holds exactly when $\bra m Q_{ij}\ket m=0$ for every $i,j$. The same condition then holds for $n$.
\end{theorem}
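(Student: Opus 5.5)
The plan is to start from the aligned-input formulas in Eq.~\eqref{eq:aligned-fidelities}, which already give $F_{\SW}^2-F_{\FO}^2=2q\operatorname{Re}(D-C)$. The theorem then reduces to the scalar identity $2\operatorname{Re}(D-C)=\Gamma(m)$ for a qubit target. The only qubit-specific input I intend to use is that every commutator is traceless: $\tr Q_{ij}=\tr(K_iL_j)-\tr(L_jK_i)=0$. Since $\{\ket m,\ket n\}$ is an orthonormal basis of $\mathbb C^2$, this gives
\[
 \bra m Q_{ij}\ket m=-\bra n Q_{ij}\ket n .
\]
In the notation of Eq.~\eqref{eq:elements}, $x_{ij}-u_{ij}=v_{ij}-y_{ij}$ for every $i,j$.

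Next I rearrange this single relation in two ways, as $x+y=u+v$ and as $x-v=u-y$, suppressing indices. Summing $|x+y|^2=|u+v|^2$ over $i,j$ and using the norm equalities $\sum|x|^2=\sum|u|^2=P_m$ and $\sum|y|^2=\sum|v|^2=P_n$ from Eq.~\eqref{eq:constants} gives
\[
 \operatorname{Re}C=\operatorname{Re}\sum_{ij}u_{ij}v_{ij}^* .
\]
In the same way, $\sum|x-v|^2=\sum|u-y|^2$ gives
\[
 \operatorname{Re}D=\operatorname{Re}\sum_{ij}u_{ij}y_{ij}^* .
\]

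Then I write $\Gamma(m)=\sum|x-u|^2=\sum(x-u)(v-y)^*$, again using the traceless relation. Expanding gives
\[
 \Gamma(m)=D-C-\sum_{ij}u_{ij}v_{ij}^*+\sum_{ij}u_{ij}y_{ij}^* .
\]
Since $\Gamma(m)$ is real, taking real parts and substituting the two identities above yields $\Gamma(m)=2\operatorname{Re}(D-C)$. This proves Eq.~\eqref{eq:qubit-gap}, and $F_{\SW}\geq F_{\FO}$ follows because $\Gamma(m)\ge0$ and fidelities are nonnegative.

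For the equality statement with $q>0$: $\Gamma(m)$ is a sum of nonnegative terms, so it vanishes exactly when every $\bra mQ_{ij}\ket m$ vanishes. Tracelessness then transfers this condition to $n$.

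The main obstacle is spotting the right consequence of tracelessness: namely, that it produces the two rearranged sum identities which eliminate the cross sums $\sum u v^*$ and $\sum u y^*$. This step is where $d=2$ is essential, since for $d>2$ the diagonal elements on $m$ and $n$ are no longer tied by the trace.
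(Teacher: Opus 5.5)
Your proof is correct and takes essentially the paper's route: the trace relation $x_{ij}-u_{ij}=v_{ij}-y_{ij}$, which holds because $\{m,n\}$ is a complete basis, combined with the norm equalities and Eq.~\eqref{eq:aligned-fidelities}. The paper reaches $2\operatorname{Re}(D-C)=\Gamma(m)$ more directly: it writes $D-C=\sum_{ij}x_{ij}(v_{ij}-y_{ij})^*=\sum_{ij}x_{ij}(x_{ij}-u_{ij})^*$ and uses only $\sum_{ij}|x_{ij}|^2=\sum_{ij}|u_{ij}|^2$, so your two auxiliary real-part identities are valid but unnecessary.
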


\begin{proof}
Because $m,n$ form a complete qubit basis, cyclicity of the trace gives
\begin{equation}
 x_{ij}+y_{ij}=u_{ij}+v_{ij},\qquad
 v_{ij}-y_{ij}=x_{ij}-u_{ij}.
 \label{eq:trace-identity}
\end{equation}
Together with the equal norms in Eq.~\eqref{eq:constants}, this gives
\begin{equation}
 \operatorname{Re}(D-C)
 =\operatorname{Re}\sum_{ij}x_{ij}(x_{ij}-u_{ij})^*
 =\frac12\sum_{ij}|x_{ij}-u_{ij}|^2.
 \label{eq:square-identity}
\end{equation}
Substitution into Eq.~\eqref{eq:aligned-fidelities} proves the gap. Its nonnegative summands give the equality condition; $\tr Q_{ij}=0$ gives the same condition for $n$.
\end{proof}

Equations~\eqref{eq:pure-product-gap} and~\eqref{eq:qubit-gap} differ only in sign. Product branches sample the same commutator expectation; aligned qubit branches sample opposite expectations. Coupling this amplitude to $Z_C$ in Eq.~\eqref{eq:order-amplitude} converts the product loss into the aligned gain.

For a qubit target, the overlap identity extends to every mixed state supported on
$\cH_{mn}=\operatorname{span}\{\ket{0m},\ket{1n}\}$:
\begin{equation}
 \rho=\begin{pmatrix}r&c\\c^*&s\end{pmatrix}_{\!\cH_{mn}},
 \qquad r+s=1,\quad |c|^2\leq rs.
 \label{eq:mixed-input}
\end{equation}
Its two overlaps have the form in Eq.~\eqref{eq:aligned-fidelities}, with
$|a|^4,|b|^4,q$ replaced by $r^2,s^2,|c|^2$. Thus
\begin{equation}
 P_{\SW}-P_{\FO}=|c|^2\Gamma(m).
 \label{eq:mixed-gap}
\end{equation}
If $c=0$, the two outputs coincide. For $c\neq0$, overlap equality requires $\Gamma(m)=0$. The relevant coherence~\cite{Baumgratz2014} is the amplitude $c$ in the joint basis $\{\ket{0m},\ket{1n}\}$; the reduced control state is diagonal. Appendix~\ref{app:mixed} gives the corresponding root fidelities.

\section{Commutator geometry and the choice of input}\label{sec:geometry}

\subsection{The optimal target basis}

For a qubit, every commutator is traceless and has a Pauli expansion
\begin{equation}
 Q_{ij}=\bm d_{ij}\cdot\bm\sigma,\qquad
 d_{ij,\alpha}=\frac12\tr(\sigma_\alpha Q_{ij}).
\end{equation}
Define the real symmetric matrix
\begin{equation}
 G_{\alpha\beta}=\operatorname{Re}\sum_{ij}
 d_{ij,\alpha}^*d_{ij,\beta},\qquad \alpha,\beta\in\{x,y,z\}.
 \label{eq:G}
\end{equation}
For a pure target state with Bloch vector $\bm n$,
\begin{equation}
 \Gamma(m)=\bm n^{\mathsf T}G\bm n,\qquad |\bm n|=1.
 \label{eq:bloch}
\end{equation}
Thus $G$ is positive semidefinite and independent of the Kraus representation. Under a target basis rotation represented by $O\in\mathrm{SO}(3)$, its coordinate matrix transforms as $G\mapsto OGO^{\mathsf T}$. Its eigenvectors identify the extremal target directions.

\begin{corollary}[Basis optimization at fixed entanglement]\label{cor:optimization}
For the commuting qubit channels of Theorem~\ref{thm:qubit}, let $\mathcal C=2|ab|$ be the concurrence of the pure state~\eqref{eq:aligned}~\cite{Wootters1998}. Among such inputs with fixed $\mathcal C$,
\begin{equation}
 \frac{\mathcal C^2}{4}\lambda_{\min}(G)
 \leq F_{\SW}^2-F_{\FO}^2
 \leq\frac{\mathcal C^2}{4}\lambda_{\max}(G).
 \label{eq:optimal-gap}
\end{equation}
Both endpoints are attained by choosing the Bloch axis of $m,n$ along the corresponding eigenvector. Maximizing also over the Schmidt coefficients gives $\lambda_{\max}(G)/4$, attained by a maximally entangled input in the optimal basis.
\end{corollary}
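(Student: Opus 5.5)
The plan is to combine Theorem~\ref{thm:qubit} with the Bloch representation~\eqref{eq:bloch} and a Rayleigh quotient argument. For the pure input~\eqref{eq:aligned}, the concurrence is $\mathcal C=2|ab|$, so $q=|a|^2|b|^2=\mathcal C^2/4$. Theorem~\ref{thm:qubit} then gives
\begin{equation*}
 F_{\SW}^2-F_{\FO}^2=\frac{\mathcal C^2}{4}\,\Gamma(m).
\end{equation*}
Fixing $\mathcal C$ fixes the prefactor. The only remaining freedom is the choice of orthonormal basis $\{m,n\}$, together with the phases of $a,b$, which cancel anyway.

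First I will check Eq.~\eqref{eq:bloch}. Write $\ket m\!\bra m=(I+\bm n\cdot\bm\sigma)/2$. Then $\bra m Q_{ij}\ket m=\tr(\ket m\!\bra m\,\bm d_{ij}\cdot\bm\sigma)=\bm d_{ij}\cdot\bm n$, since $\tr\sigma_\alpha=0$ and $\tr(\sigma_\alpha\sigma_\beta)=2\delta_{\alpha\beta}$. Squaring and summing gives $\sum_{ij}|\bm d_{ij}\cdot\bm n|^2=\bm n^{\mathsf T}G\bm n$, because $\bm n$ is real and only the real part of the Hermitian matrix $\sum_{ij}d_{ij,\alpha}^*d_{ij,\beta}$ survives in the quadratic form. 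Hence $G$ is real symmetric and positive semidefinite.

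Next, every unit vector $\bm n\in S^2$ arises as the Bloch vector of some pure $\ket m$. Its orthogonal partner $\ket n$ has Bloch vector $-\bm n$, so the basis $\{m,n\}$ ranges over all axes. The Rayleigh--Ritz bounds $\lambda_{\min}(G)\le\bm n^{\mathsf T}G\bm n\le\lambda_{\max}(G)$ for unit $\bm n$ then give~\eqref{eq:optimal-gap}. Each bound is attained when $\bm n$ is a unit eigenvector for the corresponding eigenvalue. Since $\Gamma$ is even in $\bm n$ (so $\Gamma(n)=\Gamma(m)$), either sign of the eigenvector, and either labeling of $m,n$, gives the same value.

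Finally, I maximize over the Schmidt coefficients. Under $|a|^2+|b|^2=1$ we have $\mathcal C=2|a||b|\le1$, with equality iff $|a|=|b|=1/\sqrt2$, by AM--GM. The nonnegative factor $\lambda_{\max}(G)$ then gives the maximum $\lambda_{\max}(G)/4$, attained by a maximally entangled input in the eigenbasis. The main point requiring care is not any analytic step but the convention that ``such inputs with fixed $\mathcal C$'' means the aligned family~\eqref{eq:aligned} with $\mathcal C>0$ (so $ab\neq0$). If $\mathcal C=0$, both sides of~\eqref{eq:optimal-gap} are zero, so that case is trivial.
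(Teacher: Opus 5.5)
Your proposal is correct and follows the paper's proof: substitute $q=\mathcal C^2/4$ into Theorem~\ref{thm:qubit}, then apply the Rayleigh quotient bounds to $\Gamma(m)=\bm n^{\mathsf T}G\bm n$. Your additional steps fill in details the paper leaves implicit: the check of Eq.~\eqref{eq:bloch}, the observation that every axis arises as the Bloch axis of some basis $\{m,n\}$, and the AM--GM bound $\mathcal C\leq1$.
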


\begin{proof}
Use $q=\mathcal C^2/4$ in Eq.~\eqref{eq:qubit-gap} and apply the extremal property of the Rayleigh quotient to Eq.~\eqref{eq:bloch}.
\end{proof}

Within the aligned family, concurrence sets the prefactor and $G$ selects the target direction. The same directions govern the product loss for a fixed pure control. A zero eigenvalue gives equality in both comparisons. Equation~\eqref{eq:optimal-gap} optimizes the gap in squared fidelity; the corresponding root fidelity gap also depends on the baseline fidelity.

\subsection{Changing the sign at fixed entanglement}

The distinction between entanglement and alignment is explicit for the commuting bit flip and Pauli $Y$ channels
\begin{equation}
 \Phi_\gamma(\rho)=(1-\gamma)\rho+\gamma X\rho X,\qquad
 \Psi_\gamma(\rho)=(1-\gamma)\rho+\gamma Y\rho Y,
 \qquad 0\leq\gamma\leq1.
 \label{eq:pauli}
\end{equation}
Their only nonzero Kraus commutator is $Q_{11}=2i\gamma Z$, so
$G=4\gamma^2\operatorname{diag}(0,0,1)$. The computational target basis maximizes the aligned gap, while every equatorial basis gives equality. Figure~\ref{fig:orientation}(a) illustrates the opposite gaps for a balanced product input and an aligned input with equal Schmidt coefficients as the target basis varies.

More generally, the two joint channels in this example differ only in the branch where both flips occur. Its operator is $iI_C\otimes Z_T$ in the fixed order and $iZ_C\otimes Z_T$ in the switch. For \emph{every pure input} $\ket\psi$,
\begin{equation}
 \ \Delta P=\gamma^2\left[
 \langle Z_C\otimes Z_T\rangle_\psi^2-
 \langle I_C\otimes Z_T\rangle_\psi^2\right].\ 
 \label{eq:pauli-general}
\end{equation}
If $\ket\psi=\sqrt p\ket0\ket u+\sqrt{1-p}\ket1\ket v$ with normalized, not necessarily orthogonal $u,v$, this reads
\begin{equation}
 \Delta P=-4p(1-p)\gamma^2 z_u z_v,
 \qquad z_u=\bra u Z\ket u,\quad z_v=\bra v Z\ket v.
 \label{eq:pauli-conditional}
\end{equation}
For $0<p<1$ and $\gamma>0$, the gap is positive when the conditional $Z$ polarizations have opposite signs and negative when they have the same nonzero sign. Orthogonal qubit states have opposite polarizations; linearly independent conditional states can occupy the same polarization hemisphere.

To vary the control basis without changing entanglement, consider
\begin{equation}
 \ket{\Psi_{\theta,\alpha}}=
 [R_y(\alpha)_C\otimes I_T]
 \left(\cos\frac\theta2\ket{00}+\sin\frac\theta2\ket{11}\right),
 \qquad R_y(\alpha)=e^{-i\alpha Y/2}.
 \label{eq:rotated-input}
\end{equation}
Its concurrence is $\sin\theta$ for $0\leq\theta\leq\pi$, independent of $\alpha$, whereas
\begin{equation}
 \Delta P=\gamma^2(\cos^2\alpha-\cos^2\theta).
 \label{eq:orientation-gap}
\end{equation}
At $\alpha=0$ the gain is $\gamma^2\sin^2\theta$. At $\alpha=\pi/2$ it is $-\gamma^2\cos^2\theta$. For $0<\sin\theta<1$ and $\gamma>0$, control rotations therefore produce both signs at fixed concurrence. The maximally entangled case has a nonnegative gap for all $\alpha$, consistent with its diagonal control marginal in every basis. Figure~\ref{fig:orientation}(b) displays this orientation dependence using concurrence $\mathcal C=\sin\theta$ as the horizontal coordinate. For $0\leq\theta,\alpha\leq\pi/2$, the equality curve is $\alpha=\arcsin\mathcal C$; the marked points at $\mathcal C=\sqrt3/2$ illustrate gain, equality, and loss at fixed entanglement.

\begin{figure}[htbp]
 \centering
 \includegraphics[width=119mm]{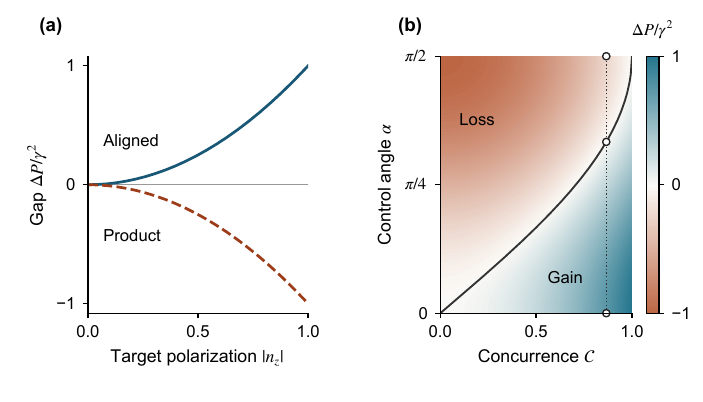}
 \caption{Commutator contributions and input orientation for Pauli $X$ and $Y$ channels with a common error probability $0<\gamma\leq1$. The gap is $\Delta P=F_{\mathrm{SW}}^2-F_{\mathrm{FO}}^2$. (a)~For the product input $|+\rangle_C|m\rangle_T$ and the aligned input $(|0m\rangle+|1n\rangle)/\sqrt2$, with $\langle m|n\rangle=0$ and $n_z=\langle m|Z|m\rangle$, the normalized gaps are $-n_z^2$ (dashed) and $+n_z^2$ (solid), respectively. Both vanish for an equatorial target basis and have maximal magnitude along the $Z$ axis. (b)~For $[R_y(\alpha)\otimes I][\cos(\theta/2)|00\rangle+\sin(\theta/2)|11\rangle]$, with $0\leq\theta,\alpha\leq\pi/2$, the concurrence is $\mathcal C=\sin\theta$ and the normalized gap is $\mathcal C^2-\sin^2\alpha$. The solid curve $\alpha=\arcsin\mathcal C$ separates gain and loss. The dotted line fixes $\mathcal C=\sqrt3/2$; the circles at $\alpha=0,\pi/3,\pi/2$ give normalized gaps $3/4,0,-1/4$, respectively}
 \label{fig:orientation}
\end{figure}

A concrete entangled input with a negative gap is
\begin{equation}
 \ket\chi=\frac{\sqrt3}{2}\ket+_C\ket0_T
 +\frac12\ket-_C\ket1_T.
 \label{eq:qubit-counterexample}
\end{equation}
It has concurrence $\sqrt3/2$, yet
\begin{equation}
 P_{\SW}=(1-\gamma)^2,
 \qquad P_{\FO}=(1-\gamma)^2+\frac{\gamma^2}{4}.
 \label{eq:counterexample-gap}
\end{equation}
Thus $F_{\SW}<F_{\FO}$ for every $\gamma>0$, with $F_{\SW}=0$ and $F_{\FO}=1/2$ at $\gamma=1$. The counterexample and Eq.~\eqref{eq:orientation-gap} identify the control basis as part of the physical input specification.

\section{The role of target dimension}\label{sec:dimension}

The two conditional target states in the qubit theorem form a complete basis. In higher dimension, the remaining diagonal entries also contribute to the zero trace of each commutator. A qutrit permits equal commutator expectations on an orthogonal pair and reverses the sign of the aligned input gap.

\begin{proposition}[A commuting qutrit counterexample]\label{prop:qutrit}
There exist commuting unitary qutrit channels and an input of the form~\eqref{eq:aligned} with equal Schmidt coefficients for which
\begin{equation}
 F_{\SW}=\frac14<\frac12=F_{\FO}.
 \label{eq:qutrit-values}
\end{equation}
\end{proposition}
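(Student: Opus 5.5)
The plan is to take both channels unitary, $\Phi(\rho)=U\rho U^\dagger$ and $\Psi(\rho)=V\rho V^\dagger$, with a Weyl (clock and shift) pair on $\mathbb C^3$ satisfying $UV=\omega VU$, where $\omega=e^{2\pi i/3}$. Then $\Phi\circ\Psi=\Psi\circ\Phi$, because the global phase cancels under conjugation. Yet the single Kraus commutator $Q=UV-VU=(1-\bar\omega)UV$ is a nonzero multiple of a unitary. Since each channel has one Kraus operator, the sums in Eq.~\eqref{eq:constants} collapse to single terms.

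Write $W=UV$ and take the input~\eqref{eq:aligned} with $a=b=1/\sqrt2$. Set $x=\bra m W\ket m$ and $y=\bra n W\ket n$, so that $v=\bra n VU\ket n=\bar\omega y$. Both outputs are pure, being images under the joint unitaries $I_C\otimes W$ and $\ket0\!\bra0\otimes W+\ket1\!\bra1\otimes VU$. Directly, or from Eq.~\eqref{eq:aligned-fidelities}, this gives
\begin{equation*}
 F_{\FO}=\tfrac12|x+y|,\qquad F_{\SW}=\tfrac12|x+\bar\omega y|.
\end{equation*}
The proposition therefore reduces to a finite search for orthonormal $m,n$ with $|x+y|=1$ and $|x+\bar\omega y|=\tfrac12$.

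The geometric input is the numerical range. In the eigenbasis $\{e_0,e_1,e_2\}$ of $W$, with eigenvalues $1,\omega,\omega^2$ (up to a global phase, which can be absorbed into $U$), the values $x$ and $y$ are convex combinations $\sum_k p_k\lambda_k$ and $\sum_k p'_k\lambda_k$. For orthonormal $m,n$ the weights satisfy $p_k+p'_k\le1$, so the remaining vector $e=m\times n$ carries weight $1-p_k-p'_k$. This is exactly where the qutrit differs from Theorem~\ref{thm:qubit}. There, $x+y=\tr W$ is fixed, and the trace identity~\eqref{eq:trace-identity} forces the sign. Here the third diagonal entry absorbs part of the trace, so $x$ and $y$ can be chosen nearly independently inside the triangle with vertices $1,\omega,\omega^2$.

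The hardest step, I expect, is producing explicit vectors $m,n$ that meet both modulus conditions exactly while staying orthogonal. Naive symmetric choices fail. Taking $m=e_0$ forces $y$ onto the edge $[\omega,\omega^2]$, and then $|1+y|=1$ holds only at a vertex. Taking both $m$ and $n$ to have weights $(2/3,1/6,1/6)$, which gives $x=y=\tfrac12$, cannot be made orthogonal. I would first try $x=y$, which requires $|x|=\tfrac12$ and $|1+\bar\omega|\,|x|=\tfrac12$, both satisfied since $|1+\bar\omega|=1$. I would pick two orthogonal vectors whose weight vectors differ but give the same point $x$, which is possible because the weights need not be equal. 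Concretely, I would use $m,n$ spanning a plane where the compression of $W$ is a $2\times2$ matrix with both diagonal entries equal to a common $x$ of modulus $\tfrac12$; any $2\times2$ compression has a basis with equal diagonal entries equal to half its trace. This reduces the problem to finding a two-dimensional subspace on which the compressed trace is $1$ in modulus, with argument free after adjusting the phase of $U$. That should be achievable by continuity of the trace of the compression as the plane rotates from $\operatorname{span}\{e_0,e_1\}$, with trace $1+\omega$ of modulus $1$. In fact that subspace already works, with $m,n=(e_0\pm e_1)/\sqrt2$ and $x=y=(1+\omega)/2$. Finally, I would verify the two displayed values directly, giving $F_{\SW}=\tfrac14<\tfrac12=F_{\FO}$.
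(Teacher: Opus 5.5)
Your proposal is correct and is essentially the paper's proof. The paper uses the explicit Weyl pair $K=X_3$, $L=X_3^\dagger Z_3$, so that $KL=Z_3$ and $LK=\omega KL$; this is your relation with $\omega\leftrightarrow\bar\omega$, which is immaterial since $|1+\omega|=|1+\bar\omega|=1$. With the same balanced input built from $m,n=(\ket0\pm\ket1)/\sqrt2$ in the eigenbasis of the composite, it obtains return amplitudes $z=(1+\omega)/2$ and $(1+\omega)z/2$, of moduli $1/2$ and $1/4$, exactly as your final choice gives.
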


\begin{proof}
Let $\omega=e^{2\pi i/3}$ and define
\begin{equation}
 X_3\ket j=\ket{j+1\pmod3},\qquad
 Z_3=\operatorname{diag}(1,\omega,\omega^2),\qquad
 K=X_3,\quad L=X_3^\dagger Z_3.
 \label{eq:qutrit-operators}
\end{equation}
Then $KL=Z_3$ and $LK=\omega Z_3$, so the two unitary channels commute and their common composition is $\Ad_{Z_3}$. Choose
\begin{equation}
 \ket m=\frac{\ket0+\ket1}{\sqrt2},\qquad
 \ket n=\frac{\ket0-\ket1}{\sqrt2},\qquad
 \ket\Psi=\frac{\ket0_C\ket m+\ket1_C\ket n}{\sqrt2}.
 \label{eq:qutrit-input}
\end{equation}
The orthogonal states $m,n$ have the same diagonal expectation
$z=\bra m Z_3\ket m=\bra n Z_3\ket n=(1+\omega)/2$. The fixed-order return amplitude is $z$, and the switch amplitude is $(1+\omega)z/2$. Since $|1+\omega|=1$, their moduli are $1/2$ and $1/4$.
\end{proof}

In this example
\begin{equation}
 [K,L]=(1-\omega)Z_3,\qquad
 \bra m[K,L]\ket m=\bra n[K,L]\ket n=\frac{1-\omega^2}{2}.
 \label{eq:qutrit-commutator}
\end{equation}
The third target direction balances the equal expectations on $m,n$ to give zero trace. The qubit proof uses completeness of the pair precisely at this step. Here the switched unitary factors as $(\ket0\bra0+\omega\ket1\bra1)_C\otimes Z_3$: the fidelity change comes from a relative control phase.

The violation also occurs for noisy unital channels. For
\begin{equation}
 \Phi_t=(1-t)\Id+t\Ad_K,\qquad
 \Psi_t=(1-t)\Id+t\Ad_L,
 \qquad 0\leq t\leq1,
 \label{eq:noisy-qutrit}
\end{equation}
the two channels commute. All branches except the simultaneous $K,L$ branch are identical in the switch and fixed order. The input~\eqref{eq:qutrit-input} therefore has
\begin{equation}
 \Delta P=-\frac{3t^2}{16}<0\qquad (t>0).
 \label{eq:noisy-qutrit-gap}
\end{equation}
Thus every $t>0$ in this noisy unital family gives a negative gap.

The accessible target sector determines the relevant dimension. Suppose there is a two-dimensional subspace $\mathcal V\subseteq\cH_T$ invariant under every $K_i$ and $L_j$. Their restrictions are CPTP maps on $\mathcal V$ and commute there. Theorem~\ref{thm:qubit} then applies to any orthonormal $m,n\in\mathcal V$, even if the ambient space is larger. The trace argument therefore applies to a common invariant qubit sector. In the qutrit construction, the individual channels carry the chosen target support through a larger sector, even though their composition preserves that support.

\section{Encoded state preservation and parameter estimation}
\label{sec:encoding}

\subsection{From a logical phase error to a stabilizer}

Logical error structure organizes the preservation of encoded information in quantum error correction~\cite{Preskill1998,Gottesman1997}. For the equally noisy Pauli channels~\eqref{eq:pauli}, the positive fidelity gap has a direct interpretation in the subspace
\begin{equation}
 \cH_L=\operatorname{span}\{\ket{0_L}=\ket{00},
 \ket{1_L}=\ket{11}\},\qquad
 \Pi_L=\frac{I_C\otimes I_T+Z_C\otimes Z_T}{2}.
 \label{eq:code}
\end{equation}
Here $Z_C\otimes Z_T$ acts as a stabilizer~\cite{Gottesman1997}, while $I_C\otimes Z_T$ acts as the logical Pauli operator $Z_L$. The switch therefore converts the branch with both flips from a logical phase error into the identity on the code. Table~\ref{tab:branches} summarizes the four error branches, and Fig.~\ref{fig:qfi}(a) illustrates the different logical actions of the double error branch.

\begin{table}[htbp]
\caption{Action of the Pauli branches on $\cH_L$. Overall phases are omitted. A single flip maps the state into the orthogonal odd parity subspace}
\label{tab:branches}
\centering
\begin{tabular}{@{}llll@{}}
\toprule
Branch & Probability & Fixed order & Switch\\
\midrule
No flip & $(1-\gamma)^2$ & $I_L$ & $I_L$\\
$X$ only & $\gamma(1-\gamma)$ & Outside $\cH_L$ & Outside $\cH_L$\\
$Y$ only & $\gamma(1-\gamma)$ & Outside $\cH_L$ & Outside $\cH_L$\\
Both flips & $\gamma^2$ & $Z_L$ & $I_L$\\
\bottomrule
\end{tabular}
\end{table}

Write $A_\gamma=(1-\gamma)^2+\gamma^2$. For every density operator $\rho_L$ supported on $\cH_L$, Table~\ref{tab:branches} gives
\begin{align}
 \Pi_L\cS(\rho_L)\Pi_L&=A_\gamma\rho_L,
 \label{eq:code-switch}\\
 \Pi_L\mathcal F(\rho_L)\Pi_L
 &=(1-\gamma)^2\rho_L+\gamma^2Z_L\rho_L Z_L.
 \label{eq:code-fixed}
\end{align}
Consequently,
\begin{equation}
 F(\rho_L,\cS(\rho_L))=\sqrt{A_\gamma},
 \label{eq:code-fidelity}
\end{equation}
independently of the logical state and its purity. The same statement holds if the encoded qubit is correlated with an untouched reference $R$: for every state $\rho_{RL}$ supported on $\cH_R\otimes\cH_L$,
\begin{equation}
 F\!\left(\rho_{RL},(\Id_R\otimes\cS)(\rho_{RL})\right)
 =\sqrt{A_\gamma}.
 \label{eq:reference-fidelity}
\end{equation}
Compressing the output onto this support gives $A_\gamma\rho_{RL}$, so the operator under the fidelity square root is $A_\gamma\rho_{RL}^2$. Equation~\eqref{eq:reference-fidelity} follows from $\tr\rho_{RL}=1$ and holds for every encoded state and its correlations with a reference.

For a pure logical state $a\ket{0_L}+b\ket{1_L}$, Eq.~\eqref{eq:code-fixed} gives
\begin{equation}
 F_{\SW}^2=A_\gamma,\qquad
 F_{\FO}^2=A_\gamma-4\gamma^2|a|^2|b|^2.
 \label{eq:logical-pure}
\end{equation}
The gap is largest for equal logical populations because that state is most sensitive to the logical phase flip. The switch removes this logical error while leaving the single flip leakage unchanged. At $\gamma=1$, both physical channels are deterministic unitary flips; the switch preserves the whole code subspace, whereas the fixed order applies $Z_L$. For example, it maps $\ket{\Phi^+}$ to the orthogonal pure state $\ket{\Phi^-}$.

For the mixed logical state~\eqref{eq:mixed-input}, the overlap gain depends only on its coherence magnitude:
\begin{equation}
 \Delta P=4\gamma^2|c|^2.
 \label{eq:logical-mixed}
\end{equation}
A fixed phase rotation of $c$ preserves the gap. Dephasing the preparation as $c\mapsto\kappa c$ multiplies it by $|\kappa|^2$, with fidelity and overlap evaluated relative to that preparation. The magnitude of the logical coherence therefore sets the overlap gain; its phase fixes the azimuthal direction of the logical Bloch vector.

\subsection{A channel ordering for encoded information}
\label{sec:encoded-qfi}

The same stabilizer structure gives a channel relation between the outputs. Let $\Pi_e=\Pi_L$, $\Pi_o=I-\Pi_L$, and write
$B_\gamma=1-2\gamma$. For an encoded state
$\rho_L=\left(\begin{smallmatrix}r&c\\c^*&s\end{smallmatrix}\right)$,
the even parity blocks in Eqs.~\eqref{eq:code-switch} and
\eqref{eq:code-fixed} are
\begin{equation}
 B_{\SW}^{(e)}=A_\gamma\rho_L,\qquad
 B_{\FO}^{(e)}=
 \begin{pmatrix}A_\gamma r&B_\gamma c\\B_\gamma c^*&A_\gamma s\end{pmatrix}.
 \label{eq:qfi-even-blocks}
\end{equation}
Both odd parity blocks equal
$2\gamma(1-\gamma)\operatorname{diag}(r,s)$ in the basis
$\ket{01},\ket{10}$. The full outputs are direct sums of these parity blocks.

\begin{theorem}[Channel ordering on the Pauli code]\label{thm:degradation}
For the Pauli channels~\eqref{eq:pauli} with a common noise probability $0\leq\gamma\leq1$ and the code $\cH_L$ in Eq.~\eqref{eq:code}, define a CPTP map
$\mathcal R_\gamma$ by
\begin{equation}
 R_0=\frac{1-\gamma}{\sqrt{A_\gamma}}\Pi_e,\qquad
 R_1=\frac{\gamma}{\sqrt{A_\gamma}}(I_C\otimes Z_T)\Pi_e,\qquad
 R_2=\Pi_o.
 \label{eq:qfi-degradation-kraus}
\end{equation}
For every density operator supported on $\cH_L$,
\begin{equation}
 \ \mathcal F(\rho_L)=\mathcal R_\gamma\bigl(\cS(\rho_L)\bigr).\ 
 \label{eq:qfi-degradation}
\end{equation}
The identity holds with an arbitrary untouched reference system. At each fixed $\gamma$, $\mathcal R_\gamma$ is independent of every parameter encoded in the input state.
\end{theorem}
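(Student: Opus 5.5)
First we check that $\mathcal R_\gamma$ is CPTP by verifying the completeness relation. Since $\Pi_e$ and $\Pi_o$ are complementary projectors and $I_C\otimes Z_T$ is a Hermitian unitary that commutes with $\Pi_e$ (it commutes with $Z_C\otimes Z_T$), we get $R_1^\dagger R_1=\gamma^2\Pi_e/A_\gamma$. Hence
\begin{equation*}
 \sum_k R_k^\dagger R_k=\frac{(1-\gamma)^2+\gamma^2}{A_\gamma}\Pi_e+\Pi_o=I .
\end{equation*}
The Kraus operators involve only $\gamma$ and fixed operators, so $\mathcal R_\gamma$ does not depend on any parameter of the input. At $\gamma=0$ and $\gamma=1$ we have $A_\gamma=1$, so nothing degenerates.

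Next we use the parity block decomposition stated just before the theorem. For $\rho_L$ supported on $\cH_L$, both outputs are block diagonal: $\cS(\rho_L)=B^{(e)}_{\SW}\oplus B^{(o)}$ and $\mathcal F(\rho_L)=B^{(e)}_{\FO}\oplus B^{(o)}$, with the same odd block $B^{(o)}=2\gamma(1-\gamma)\operatorname{diag}(r,s)$. To justify that there are no cross terms, we use Table~\ref{tab:branches}: every branch operator either preserves $\cH_L$ or maps it into the odd subspace, so each branch output lies entirely in one parity sector.

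The operator $R_2=\Pi_o$ kills the even block and returns the odd block unchanged. The operators $R_0$ and $R_1$ kill the odd block. On the even block they give
\begin{equation*}
 \frac{(1-\gamma)^2}{A_\gamma}A_\gamma\rho_L+\frac{\gamma^2}{A_\gamma}A_\gamma Z_L\rho_L Z_L
 =(1-\gamma)^2\rho_L+\gamma^2 Z_L\rho_LZ_L,
\end{equation*}
because $I_C\otimes Z_T$ restricted to $\cH_L$ is $Z_L$. This is exactly Eq.~\eqref{eq:code-fixed}, equivalently $B^{(e)}_{\FO}$ with off-diagonal entries $((1-\gamma)^2-\gamma^2)c=B_\gamma c$. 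Summing the three contributions gives $\mathcal R_\gamma(\cS(\rho_L))=\mathcal F(\rho_L)$.

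For the reference statement, we redo the same computation with $\rho_{RL}$ on $\cH_R\otimes\cH_L$. Every step above is linear and uses only the operator identities on the control and target, so it holds blockwise. Concretely, expand $\rho_{RL}=\sum_{kl}\ket k\!\bra l_R\otimes\rho_{kl}$ with each $\rho_{kl}$ an operator on $\cH_L$. Equation~\eqref{eq:qfi-degradation} was shown for density operators, but both sides are linear and the parity argument never used positivity, so it holds for every operator supported on $\cH_L$, in particular each $\rho_{kl}$. Applying $\Id_R\otimes$ to both sides then gives the identity with the reference.

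The only step that needs care is the absence of even–odd cross terms in the output, including for non-diagonal $\rho_{kl}$. This follows from the fact that each individual branch operator maps $\cH_L$ into a single parity sector. Cross terms between different branches never arise, because a channel output is a sum of $W\rho W^\dagger$ over single Kraus operators $W$, not over pairs of different branches.
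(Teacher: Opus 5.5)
Your proof is correct and follows essentially the same route as the paper. You check the completeness relation, use the parity block decomposition with the common odd block, and convert the even block via $A_\gamma\rho_L\mapsto(1-\gamma)^2\rho_L+\gamma^2Z_L\rho_LZ_L$, which the paper phrases as preserving populations and scaling coherences by $B_\gamma/A_\gamma$. You obtain the reference case by linear extension over reference matrix units, as the paper does; your branch-by-branch justification that no even--odd cross terms arise is a useful detail the paper leaves implicit.
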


\begin{proof}
The operators satisfy $\sum_aR_a^\dagger R_a=\Pi_e+\Pi_o=I$. On the even sector, $\mathcal R_\gamma$ preserves populations and multiplies coherences by $B_\gamma/A_\gamma$, converting $B_{\SW}^{(e)}$ into $B_{\FO}^{(e)}$. It leaves the common odd block unchanged. This proves Eq.~\eqref{eq:qfi-degradation}. Replacing scalar entries by operators on a reference gives the same identity.
\end{proof}

Thus conditional logical dephasing converts the switched output into the output under fixed order. For a differentiable family
$\rho(\boldsymbol\lambda)$, define the symmetric logarithmic derivatives
and the quantum Fisher information (QFI) matrix by~\cite{Braunstein1994}
\begin{equation}
 \partial_\mu\rho=\frac{L_\mu\rho+\rho L_\mu}{2},\qquad
 J_{\mu\nu}=\operatorname{Re}\tr(\rho L_\mu L_\nu).
 \label{eq:qfi-definition}
\end{equation}
For one parameter, $J_{\lambda\lambda}$ is the maximum classical Fisher information over output measurements. At fixed $\gamma$, the map $\mathcal R_\gamma$ is independent of the parameters in $\rho_L(\boldsymbol\lambda)$. Applying QFI monotonicity in every parameter direction gives
\begin{equation}
 \boldsymbol J_{\SW}(\boldsymbol\lambda)
 \succeq\boldsymbol J_{\FO}(\boldsymbol\lambda).
 \label{eq:qfi-code-ordering}
\end{equation}
This ordering includes mixed encoded states and correlations with a reference. The same channel relation also orders trace distance and optimal binary discrimination.

\subsection{Encoded phase and noise information}
\label{sec:phase-noise}

Consider a phase written onto the target before the two channels:
\begin{equation}
 \ket{\psi_{\theta,\varphi}}=
 \cos\frac\theta2\ket{00}+e^{i\varphi}\sin\frac\theta2\ket{11}.
 \label{eq:qfi-input}
\end{equation}
The encoding operation is
$U_\varphi=\ket0\bra0+e^{i\varphi}\ket1\bra1$ on $T$.
For fixed $0<\gamma<1$ and $0<\theta<\pi$, the QFI matrices in coordinates
$(\theta,\varphi)$ are
\begin{equation}
 \boldsymbol J_{\SW}=\operatorname{diag}
 \left(1,A_\gamma\sin^2\theta\right),\qquad
 \boldsymbol J_{\FO}=\operatorname{diag}
 \left(1,\frac{B_\gamma^2}{A_\gamma}\sin^2\theta\right).
 \label{eq:qfi-phase-metrics}
\end{equation}
Appendix~\ref{app:qfi} derives Eq.~\eqref{eq:qfi-phase-metrics} from the parity blocks. The populations retain the polar information. The even sector carries the phase information with probability $A_\gamma$, so its QFI is weighted by that probability in the full output.

For this encoded family, the commutator fidelity gap and phase information
gain satisfy the exact relation
\begin{equation}
 \ J_{\varphi,\SW}-J_{\varphi,\FO}
 =\frac{4\gamma^2(1-\gamma)^2}{A_\gamma}\sin^2\theta
 =\frac{4(1-\gamma)^2}{A_\gamma}\Delta P.\ 
 \label{eq:qfi-fidelity-link}
\end{equation}
At $\gamma=1/2$, the balanced input has phase QFI $1/2$ under the switch and zero under fixed order. A parity measurement followed by the locally optimal phase measurement in the even sector attains the switched value: that sector occurs with probability $1/2$ and has conditional phase QFI one.

Fidelity with the input and QFI describe different aspects of the output. At $\gamma=1$,
the balanced input has squared fidelities $P_{\FO}=0$ and $P_{\SW}=1$,
while both phase QFIs equal one. The fixed-order logical $Z$ maps the
input to an orthogonal state and preserves the distances between
neighboring phases. In a regular constant-rank family, the corresponding
root fidelity expansion expresses the local Bures geometry~\cite{Bures1969,Braunstein1994}
\begin{equation}
 F(\rho_\lambda,\rho_{\lambda+d\lambda})
 =1-\frac18J_{\lambda\lambda}\,d\lambda^2+o(d\lambda^2).
 \label{eq:qfi-local-fidelity}
\end{equation}
The channel ordering establishes that the switched output has at least as much local distinguishability as the output under fixed order for encoded state families. Appendix~\ref{app:qfi-separation} gives an aligned example with amplitude damping in which input fidelity increases and phase QFI decreases.

Figure~\ref{fig:qfi} connects the logical error mechanism in panel (a) with the squared input fidelity and phase QFI of the balanced encoded probe in panels (b,c). Panel (d) compares its normalized noise QFI under the two joint channels and shows the optimum attained by a product probe under the switch.

\begin{figure}[htbp]
 \centering
 \includegraphics[width=119mm]{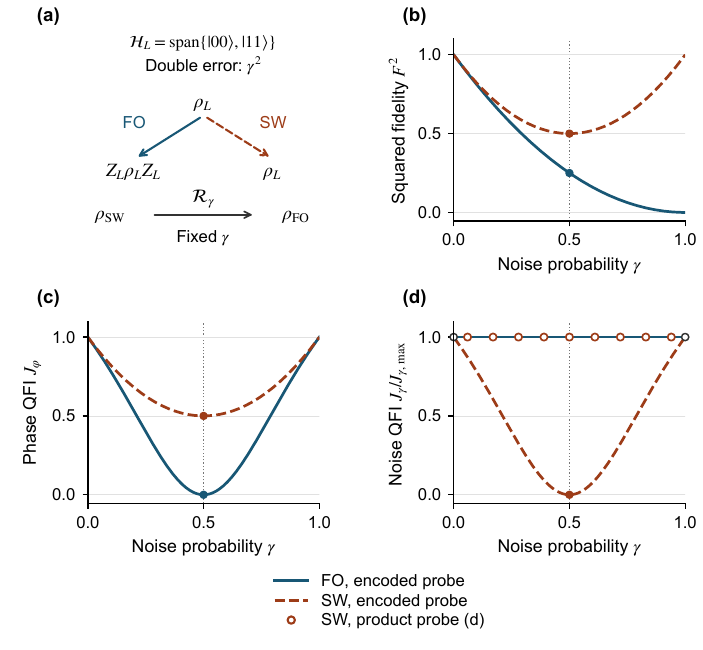}
 \caption{Logical error structure, state preservation, and parameter estimation for Pauli $X$ and $Y$ channels with a common error probability $\gamma$. (a)~The double error branch, of probability $\gamma^2$, acts on a state $\rho_L$ in $\cH_L=\operatorname{span}\{\ket{00},\ket{11}\}$ as $\rho_L\mapsto Z_L\rho_L Z_L$ under fixed order (FO) and as $\rho_L\mapsto\rho_L$ under the quantum switch (SW), where $Z_L$ is the logical Pauli $Z$ operator. The branch outputs shown are normalized. The lower arrow represents the CPTP map $\mathcal R_\gamma$, which converts the complete switched output into the fixed order output for every encoded state at fixed $\gamma$. (b,c)~Squared input fidelity and phase QFI for $\ket{\psi_\varphi}=(\ket{00}+e^{i\varphi}\ket{11})/\sqrt2$, with the phase encoded before the channels. (d)~Noise QFI normalized by $J_{\gamma,\max}=2/[\gamma(1-\gamma)]$, the maximum over inputs to either joint map. Solid and dashed curves use the same encoded probe under FO and SW. Open circles within the interval show analytical values for the SW product probe $\ket+_C\ket0_T$, which attains the same optimum as the FO encoded probe. Black open endpoint circles denote limits as $\gamma\to0,1$. Dotted vertical lines mark $\gamma=1/2$. All quantities in panels (b--d) refer to the complete joint output}
 \label{fig:qfi}
\end{figure}

To estimate the noise probability, hold $\theta,\varphi$ fixed and vary $\gamma$. With
$w_\gamma=\gamma(1-\gamma)$, the exact QFIs for $0<\gamma<1$ are
\begin{equation}
 J_{\gamma,\SW}=\frac{2B_\gamma^2}{A_\gamma w_\gamma},\qquad
 J_{\gamma,\FO}=J_{\gamma,\SW}+\frac{4\sin^2\theta}{A_\gamma}.
 \label{eq:qfi-noise}
\end{equation}
The switch merges the histories with zero and two flips inside the code, preserving the encoded signal while reducing information about the error history. Its encoded output is symmetric under $\gamma\leftrightarrow1-\gamma$, so its first derivative vanishes and $J_{\gamma,\SW}=0$ at $\gamma=1/2$. The balanced probe under fixed order has $J_{\gamma,\FO}=8$ there. Here $\gamma$ enters the degradation map itself; the signal ordering in Eq.~\eqref{eq:qfi-code-ordering} instead holds at fixed $\gamma$.

For the joint maps~\eqref{eq:switch} and~\eqref{eq:fixed} with Pauli noise~\eqref{eq:pauli}, optimizing over inputs independent of $\gamma$ gives, for $0<\gamma<1$,
\begin{equation}
 \max_{\rho_{CT}}J_{\gamma,\FO}
 =\max_{\rho_{CT}}J_{\gamma,\SW}
 =\frac{2}{\gamma(1-\gamma)}.
 \label{eq:qfi-noise-optimal}
\end{equation}
A Bell input attains the bound under fixed order, and $\ket+_C\ket0_T$ attains it under the switch. Appendix~\ref{app:qfi} derives the bound from the two Bernoulli error labels and gives the attaining measurements. Thus the encoded probe preserves phase information and identifies noise differently, while both joint channels reach the same optimum for noise estimation.

\FloatBarrier
\section{Conclusion}\label{sec:conclusion}

Exact Kraus commutator identities characterize input preservation
under fixed order and the quantum switch for commuting channels.
For every product input, the fidelity under fixed order is at least
as high as under the switch. For product inputs with a pure control,
commutators compressed onto the target support determine the exact
overlap loss in any finite target dimension, with a common equality
condition for overlap and fidelity. The reverse fidelity inequality
holds for pure states of two qubits whose control Schmidt basis can
be chosen to coincide with the order basis. For these aligned inputs,
the squared fidelity gain is a sum of squared moduli of commutator
expectation values and is therefore nonnegative.

The commutator matrix describes the geometry of this gain. An
eigenvector associated with its largest eigenvalue determines a
target basis that maximizes the squared fidelity gap at fixed
concurrence within the aligned family. In the Pauli example, control
rotations produce gaps of opposite sign at the same nonzero,
nonmaximal concurrence. The qutrit counterexample shows the role
of accessible target dimension, while aligned inputs supported on
a common invariant qubit sector retain the nonnegative gap.
Thus, both input orientation and accessible target dimension
affect the fidelity comparison.

For Pauli $X$ and $Y$ channels with equal error probabilities, the
switch replaces a logical phase flip in the double error branch
with an operator that stabilizes the joint code. The fidelity
between the switched output and the input is independent of the
encoded state, even when the input is correlated with an untouched
reference. Conditional logical dephasing maps the switched output
to its fixed order counterpart. At fixed noise, this relation shows
that the switch preserves at least as much distinguishability and
QFI as fixed order for every encoded state family. The QFI ordering
also holds for multiparameter families in the positive semidefinite
matrix sense. For the pure encoded family considered here, an exact
relation connects the phase QFI gain to the squared input fidelity gap.

Under the switch, the branches with no error and with both errors
act identically on the code, making these two histories
indistinguishable in the output. The exact QFI expressions quantify
the phase information gain and the reduction in information about
the noise probability for the same pure encoded probe. Optimizing
the probes yields the same maximal noise QFI for both joint channels,
attained by a Bell probe under fixed order and a product probe under
the switch. The amplitude damping example illustrates a different
relation between the two measures: input fidelity increases while
phase QFI decreases. Fidelity measures how well a given input state
is preserved, whereas QFI quantifies local distinguishability
within a parameterized state family. The commutator identities
characterize the fidelity comparison, while the channel relation
on the Pauli code establishes the ordering of encoded information
preservation.

\backmatter

\section*{Statements and Declarations}

\paragraph*{Funding}
This work was supported by Hebei University of Science and Technology
(Starting Grant No.~81/1181298).

\paragraph*{Competing interests}
The authors declare that they have no competing interests.

\paragraph*{Author contributions}
Both authors contributed to the conceptualization, methodology,
and manuscript preparation.

\begin{appendices}
\section{Mixed states on the aligned subspace}\label{app:mixed}

Let $\rho$ have the form~\eqref{eq:mixed-input}, and let $m,n$ be orthonormal in a target space of any dimension $d\geq2$. Use the quantities in Eqs.~\eqref{eq:elements}--\eqref{eq:constants} and write $t_{\FO}=C$, $t_{\SW}=D$. Compression of the two outputs onto $\cH_{mn}$ gives
\begin{equation}
 B_X=\begin{pmatrix}
 rP_m & c\,t_X\\ c^*t_X^*&sP_n
 \end{pmatrix},\qquad X\in\{\FO,\SW\}.
 \label{eq:compressed-mixed}
\end{equation}
The off-diagonal entry is $cC$ under fixed order and $cD$ under the switch, directly from Eq.~\eqref{eq:constants}.
Tracing against the input gives
\begin{equation}
 P_X=r^2P_m+s^2P_n+2|c|^2\operatorname{Re}t_X.
 \label{eq:mixed-overlaps}
\end{equation}
For a qubit target, Eq.~\eqref{eq:square-identity} reduces the difference to Eq.~\eqref{eq:mixed-gap}.

Only the compressed output $B_X$ enters the fidelity. For any positive $2\times2$ matrix $A$, $(\tr\sqrt A)^2=\tr A+2\sqrt{\det A}$. Setting $A=\sqrt\rho B_X\sqrt\rho$ gives
\begin{equation}
\ F_X^2=P_X+
 2\sqrt{(rs-|c|^2)\bigl(rsP_mP_n-|c|^2|t_X|^2\bigr)}.\ 
 \label{eq:mixed-fidelity}
\end{equation}
The positivity of $\rho$ and $B_X$ ensures that both factors under the square root are nonnegative. The formula remains valid at rank-deficient endpoints by continuity, and becomes $F_X^2=P_X$ for a pure input.

The overlap difference depends on $\operatorname{Re}(D-C)$. Root fidelity also depends on $|C|$ and $|D|$ through the determinant term, which records the spectrum of the compressed output.

For the Pauli code of Sec.~\ref{sec:encoding}, $P_m=P_n=D=A_\gamma$ and $C=B_\gamma:=1-2\gamma$ are real. Equations~\eqref{eq:mixed-overlaps} and~\eqref{eq:mixed-fidelity} become
\begin{align}
 F_{\SW}^2&=A_\gamma,\nonumber\\
 F_{\FO}^2&=A_\gamma(r^2+s^2)+2|c|^2 B_\gamma
 +2\sqrt{(rs-|c|^2)(rsA_\gamma^2-|c|^2B_\gamma^2)}.
 \label{eq:pauli-mixed-fidelity}
\end{align}
The normalized compression $B_{\FO}/A_\gamma$ is a state, so $F_{\FO}\leq\sqrt{A_\gamma}$, with equality exactly when $B_{\FO}=A_\gamma\rho_L$. By Eq.~\eqref{eq:code-fixed}, this is equivalent to $Z_L\rho_L Z_L=\rho_L$, or $c=0$, for $\gamma>0$. At $\gamma=0$, every input gives equality.

\section{QFI of the encoded Pauli outputs}\label{app:qfi}

Let $\sigma=\bigoplus_k p_k\sigma_k$ on fixed, mutually orthogonal subspaces, with normalized $\sigma_k$ and $p_k>0$ on the parameter interval. The QFI decomposes as
\begin{equation}
 J_\lambda(\sigma)=\sum_k\frac{(\partial_\lambda p_k)^2}{p_k}
 +\sum_k p_kJ_\lambda(\sigma_k).
 \label{eq:qfi-block-rule}
\end{equation}
For a full-rank qubit with Bloch vector $\bm r$, the QFI matrix is
\begin{equation}
 J_{\mu\nu}=\partial_\mu\bm r\cdot\partial_\nu\bm r
 +\frac{(\bm r\cdot\partial_\mu\bm r)
 (\bm r\cdot\partial_\nu\bm r)}{1-|\bm r|^2}.
 \label{eq:qfi-bloch}
\end{equation}
For a pure family it is
$4\operatorname{Re}[\braket{\partial_\mu\psi|\partial_\nu\psi}
-\braket{\partial_\mu\psi|\psi}\braket{\psi|\partial_\nu\psi}]$.
Equation~\eqref{eq:qfi-definition} gives these formulas. Throughout this appendix, $0<\gamma<1$ unless an endpoint is stated explicitly.

\subsection{Signal parameters}
The normalized even block of Eq.~\eqref{eq:qfi-input} has Bloch vector
\begin{equation}
 \bm r_e=(\eta\sin\theta\cos\varphi,
 \eta\sin\theta\sin\varphi,\cos\theta),\qquad
 \eta_{\SW}=1,\quad\eta_{\FO}=\frac{B_\gamma}{A_\gamma}.
\end{equation}
For $0<\theta<\pi$ at fixed $\gamma$, the even block has polar QFI $1$, phase QFI $\eta^2\sin^2\theta$, and zero cross term. The pure state formula applies to SW and Eq.~\eqref{eq:qfi-bloch} to FO. The odd block is
$\operatorname{diag}(\cos^2(\theta/2),\sin^2(\theta/2))$, whose polar QFI
is one and phase QFI is zero. The block weights are $A_\gamma$ and
$2w_\gamma$, both independent of the signal parameters. Substitution
into Eq.~\eqref{eq:qfi-block-rule} gives
Eq.~\eqref{eq:qfi-phase-metrics} throughout $0<\theta<\pi$.
At the polar endpoints
$\theta=0,\pi$ and $0<\gamma<1$, direct SLD evaluation gives
$J_{\theta\theta,\SW}=A_\gamma$ and
$J_{\theta\theta,\FO}=B_\gamma^2/A_\gamma$. The interior polar QFI tends to one. The difference between this limit and the endpoint value reflects the change in rank~\cite{Safranek2017}.

\subsection{Noise probability}
For SW, the normalized parity blocks are independent of $\gamma$. With
$A_\gamma'=-2B_\gamma$ and $(2w_\gamma)'=2B_\gamma$, their weights give
\begin{equation}
 \frac{(A_\gamma')^2}{A_\gamma}
 +\frac{[(2w_\gamma)']^2}{2w_\gamma}
 =\frac{2B_\gamma^2}{A_\gamma w_\gamma}.
\end{equation}
For FO, the normalized even state also depends on $\gamma$ through
$\eta=B_\gamma/A_\gamma$. Since
\begin{equation}
 \eta'=-\frac{4w_\gamma}{A_\gamma^2},\qquad
 1-\eta^2=\frac{4w_\gamma^2}{A_\gamma^2},
\end{equation}
the additional even sector information is
\begin{equation}
 A_\gamma J_{\gamma,e}
 =A_\gamma\frac{(\eta')^2\sin^2\theta}{1-\eta^2}
 =\frac{4\sin^2\theta}{A_\gamma}.
\end{equation}
This proves Eq.~\eqref{eq:qfi-noise}. Combining it with Eq.~\eqref{eq:qfi-fidelity-link} relates the two QFI differences:
\begin{equation}
 J_{\varphi,\SW}-J_{\varphi,\FO}
 =w_\gamma^2\bigl(J_{\gamma,\FO}-J_{\gamma,\SW}\bigr).
\end{equation}

\subsection{Optimal noise probes}
The four Pauli branch probabilities are
\begin{equation}
 p_{00}=(1-\gamma)^2,\qquad p_{10}=p_{01}=w_\gamma,
 \qquad p_{11}=\gamma^2.
\end{equation}
For any probe independent of $\gamma$, attach an orthogonal flag to each branch:
$\sum_h p_h(\gamma)\ket h\bra h\otimes U_h\rho U_h^\dagger$, where the
branch unitaries $U_h$ are independent of $\gamma$ in both architectures.
Only the branch probabilities depend on $\gamma$. The flagged QFI is therefore the Fisher information of two independent Bernoulli trials, $\sum_h(p_h')^2/p_h=2/w_\gamma$. Tracing out the flag proves $J_\gamma\leq2/w_\gamma$, including probes correlated with a reference.

For FO, a Bell input turns the four error branches into four orthogonal
Bell states, so a Bell measurement attains the bound. For SW, the product
input $\ket{+0}$ gives orthogonal outputs $\ket{+0},\ket{+1},\ket{-0}$
with probabilities $(1-\gamma)^2,2w_\gamma,\gamma^2$. Their classical
Fisher information is $2/w_\gamma$. The two branches with one flip have the same score $\partial_\gamma\log p_h$, so merging them preserves their Fisher information. This proves
Eq.~\eqref{eq:qfi-noise-optimal}.

\section{Fidelity and parameter information orderings}
\label{app:qfi-separation}

\subsection{Fidelity gain with a phase information loss}
Consider two identical amplitude damping channels with Kraus operators
\begin{equation}
 K_0=L_0=\begin{pmatrix}1&0\\0&1/2\end{pmatrix},\qquad
 K_1=L_1=\frac{\sqrt3}{2}\ket0\bra1.
\end{equation}
Choose $\ket m=\ket+$, $\ket n=-\ket-$ and the aligned family
\begin{equation}
 \ket{\chi_\varphi}=
 \frac{3\ket0\ket m+e^{i\varphi}\ket1\ket n}{\sqrt{10}}.
\end{equation}
Its concurrence is $3/5$. The phase is encoded before the noise by the
target generator $\ket n\bra n$. At $\varphi=0$, the input vector is
$(3,3,-1,1)^{\mathsf T}/\sqrt{20}$, and the outputs are
\begin{align}
 \rho_{\FO}&=\frac1{320}
 \begin{pmatrix}279&36&-3&12\\36&9&-12&3\\-3&-12&31&-4\\12&3&-4&1\end{pmatrix},
 \label{eq:ad-fo}\\
 \rho_{\SW}&=\frac1{320}
 \begin{pmatrix}279&36&-12&12\\36&9&-12&3\\-12&-12&31&-4\\12&3&-4&1\end{pmatrix}.
 \label{eq:ad-sw}
\end{align}
On this input family, the target phase encoding is equivalent to a control phase generated by $G_C=\ket1\bra1_C\otimes I_T$. Both joint channels commute with that control phase, so $\partial_\varphi\rho_X=i[G_C,\rho_X]$. At $\varphi=0$, the displayed states are real. Writing the SLD as $L_X=iA_X$ reduces its equation to $A_X\rho_X+\rho_X A_X=2[G_C,\rho_X]$, with $A_X$ real and antisymmetric. Solving this linear system and using $J_\varphi=\tr(\rho_XL_X^2)$ gives
\begin{align}
 P_{\FO}&=\frac{173}{320},&
 P_{\SW}&=\frac{1757}{3200},&
 \Delta P&=\frac{27}{3200}>0,\\
 J_{\varphi,\FO}&=\frac{3681}{67600},&
 J_{\varphi,\SW}&=\frac{4516389}{85181600}.&&
\end{align}
The phase information difference is
\begin{equation}
 J_{\varphi,\SW}-J_{\varphi,\FO}
 =-\frac{20613933}{14395690400}<0.
\end{equation}
The two nonzero commutators have squared diagonal expectation $3/64$
each in $m$. With $q=9/100$, their sum gives the positive fidelity gap
through Theorem~\ref{thm:qubit}. The negative QFI difference separates input fidelity from local phase distinguishability. The ancillary computational files supply the exact rational SLD matrices for Eqs.~\eqref{eq:ad-fo} and~\eqref{eq:ad-sw}.

\subsection{Product probes and parameter location}
For a product input $\tau_C\otimes\rho_T(\lambda)$ with
parameter-independent $\tau_C$, suppose the channels commute at each
parameter value. They may also depend on $\lambda$. The common target
marginal is $\sigma_T(\lambda)=(\Phi\Psi)(\rho_T(\lambda))$, and the
fixed-order output is $\tau_C\otimes\sigma_T(\lambda)$.
QFI monotonicity under partial trace gives
\begin{equation}
 J_{\lambda,\SW}\geq J_\lambda(\sigma_T)
 =J_{\lambda,\FO}.
\end{equation}
Applying the argument in each parameter direction gives the QFI matrix ordering. Together with Eq.~\eqref{eq:all-product-bound}, this gives opposite inequalities for input fidelity and QFI: the switched joint output can carry more parameter information while having lower fidelity with the input.

If the parameter is instead carried entirely by $\tau_C(\lambda)$,
with fixed target input and fixed channels, FO preserves the input QFI.
Data processing from the input gives
$J_{\lambda,\SW}\leq J_{\lambda,\FO}=J_\lambda(\tau_C)$.
The location of the encoded parameter therefore determines the direction of the QFI comparison.

\end{appendices}

\end{document}